\newtheorem{theorem}{Theorem}  
\newtheorem{lemma}{Lemma}  
\newtheorem{proof}{Proof}	
\newtheorem{definition}{Definition}
\begin{document}


\title{Modeling and Analysis of the Epidemic-Behavior Co-evolution Dynamics with User Irrationality
}

\author{%
\authorblockN{%
Wenxiang Dong and 
H. Vicky Zhao
}
\authorblockA{%
Department of Automation, BNRist, Tsinghua University, Beijing, P. R. China 100084  \\
E-mail: dongwx19@mails.tsinghua.edu.cn; vzhao@tsinghua.edu.cn}
}

\maketitle
\thispagestyle{plain}
\pagestyle{fancy}
\begin{abstract}
During a public health crisis like COVID-19, individuals' adoption of protective behaviors, such as self-isolation and wearing masks, can significantly impact the spread of the disease. In the meanwhile, the spread of the disease can also influence individuals' behavioral choices. Moreover, when facing uncertain losses, individuals' decisions tend to be irrational. Therefore, it is critical to study individuals' irrational behavior choices in the context of a pandemic. In this paper, we propose an epidemic-behavior co-evolution model that captures the dynamic interplay between individual decision-making and disease spread. To account for irrational decision-making, we incorporate the Prospect Theory in our individual behavior modeling. We conduct a theoretical analysis of the model, examining the steady states that emerge from the co-evolutionary process. We use simulations to validate our theoretical findings and gain further insights. This investigation aims to enhance our understanding of the complex dynamics between individual behavior and disease spread during a pandemic.\footnote{The paper is accepted by 2023 Asia Pacific Signal and Information Processing Association Annual Summit and Conference (APSIPA ASC 2023)}
\end{abstract}
\begin{IEEEkeywords}
Irrationality, Epidemic, Individual behavior
\end{IEEEkeywords}
\section{Introduction}

The COVID-19 pandemic has triggered a profound public health crisis and inflicted substantial economic damage. Throughout such public health crisis, individuals' behavior and the pandemic influence each other in a complex way. For instance, individuals' behaviors, such as wearing masks, social distancing, and self-quarantine, can help limit the spread of the disease, while people tend to adopt protective behaviors more readily when the severity of the pandemic becomes apparent. Furthermore, individuals often display irrational behavior during the pandemic, including heightened panic at the early stage of an outbreak or a tendency to underestimate the severity of the disease as it spreads further. These irrational responses can significantly impact people's decision-making processes and ultimately influence the pandemic. Therefore, it is crucial to model individuals' irrational behavior and to analyze its impact on the pandemic.

\subsection{Literature Review}
There have been extensive works on user behavior modeling during an epidemic, which we will briefly introduce in the following. 

\subsubsection{Behavior Modeling in Disease Spread}
Several previous studies focus on modeling individual behavior during disease outbreaks. The authors of \cite{zhang2014suppression, wu2012impact, bagnoli2007risk} believe that as the number of infected individuals in a given environment increases, people are more inclined to adopt precautionary measures. Zhang et al. propose that individuals are more inclined to engage in protective behaviors when they have a higher number of infected neighbors \cite{zhang2014suppression}. Wu et al. posit that an individual's adoption of protective behaviors is influenced not only by the proportion of infected neighbors but also by the regional and global infection rates \cite{wu2012impact}. The works in \cite{funka2009spread, granell2013dynamical, granell2014competing, wang2019impact,zheng2018interplay} also propose that information dissemination plays a crucial role in shaping people's protective behavior during disease outbreaks, and individuals who receive relevant information are more likely to adopt precautionary measures in response to the pandemic. Funk et al. believe that infected individuals play a significant role in disseminating information about the disease across networks. This information dissemination can effectively encourage others to adopt protective measures in response to the ongoing spread of the disease \cite{funka2009spread}. Granell et al. develop a dual network model that incorporates both disease spread and information spread. In their model, one network represents the transmission of the disease, while the other network represents the dissemination of information. Their findings suggest that individuals who receive informational messages are more vigilant and, as a result, exhibit a lower infection rate \cite{granell2013dynamical}. However, these studies often overlook a common yet crucial factor: the influence of individual irrationality on decision-making during a pandemic. Understanding and accounting for the impact of irrational behavior are essential as they can greatly shape individuals' responses and actions during an ongoing outbreak.

\subsubsection{Irrational Behavior Modeling}
Individuals often make irrational decisions when faced with risks, such as the risk of being infected during a disease outbreak. The Prospect Theory \cite{kahneman2013prospect, kahneman2013choices}, an influential theory put forth by Nobel laureates Kahneman and Tversky, sheds light on this phenomenon. It says that individuals tend to overestimate the likelihood of low-probability events and underestimate the likelihood of high-probability events. Moreover, individuals demonstrate a heightened sensitivity to potential losses compared to gains \cite{tversky1992advances}. To effectively guide people's behavior during an epidemic and control the spread of the disease, it is of great importance to model such irrational behavior and analyze its impact on the pandemic. 
The works in \cite{oraby2015bounded, hota2019game, li2020perception} examine the influence of people's irrationality on their vaccination decisions.
Vaccination presents a fundamental behavioral choice problem where individuals must make a single binary decision on whether or not to receive a vaccine during an epidemic, wherein opting for vaccination eliminates the risk of infection. However, during an epidemic, individuals also have the option to engage in various protective behaviors, such as wearing masks, practicing hand hygiene, and self-isolation at home. These behaviors are discretionary and may change from time to time depending on the severity of the pandemic. Note that during an epidemic, individuals only decide once on whether to receive the vaccination. But for protective behaviors such as wearing masks, practicing hand hygiene, and self-isolation at home, they have to constantly decide whether to adopt them until the pandemic ends. The dynamic interplay between disease spread and individual behavioral choices occurs continuously throughout the outbreak, highlighting the complex relationship between the two. To our knowledge, no prior work modeled and analyzed the complex interplay between disease spread and behavioral choice with consideration of user irrationality.

\subsection{Our Contribution}
The main contributions of this paper are:

\begin{itemize}
    \item We develop an epidemic-behavior co-evolution model that provides a framework for analyzing the co-evolution of the pandemic and individuals' behavior. This model allows us to study the dynamic interplay between disease spread dynamics and human behavior.

    \item In our model, we incorporate the concept of individuals' irrationality and specifically model irrational behavior using the Prospect Theory. By considering irrational decision-making processes, we enhance the realism of the model and capture the complex human decision-making process during a pandemic.

    \item We conduct theoretical analyses of the co-evolution of the pandemic and individuals' behavior. By exploring the dynamics of the model, we identify and characterize the steady states, which provide insights into the long-term behavior patterns and their interplay with the pandemic.
\end{itemize}

\section{The epidemic-behavior co-evolution model}
In this section, we model the co-evolution of individual behavioral choices and disease spread during the pandemic. 
In our model, extending the model in \cite{poletti2009spontaneous}, we assume that individuals can choose between 2 possible behaviors: risky behavior $a_1$ (like going out) and conservative behavior $a_2$ (like self-isolation). Individuals would choose the behavior according to the severity of the epidemic, and these behavior choices will, in turn, affect the epidemic. 
We use an undirected graph to model the interpersonal connections among individuals. This graph captures the spread of the disease through the network, and in addition, neighbors in this network influence each other's behavior. To simplify the analysis, in this work, we assume a regular network consisting of $N$ nodes, with degree $\bar{k}$.

Our model contains two parts, the disease spread model and the behavior change model. Given the current actions of all users, the disease spread model quantifies how the pandemic will spread in the network; while given the current number of infected people and their neighbors' current actions, the behavior change model describes how users choose their protective behaviors. We will introduce them one by one in the following.

\subsection{The Disease Spread Model}

We adopt the classical Susceptible-Infected-Susceptible (SIS) model to characterize the spread of the disease. In this model, each individual can be in one of two health states: susceptible or infected. In each time slot, a susceptible individual has the potential to become infected by an infected individual, with a given infection rate. Conversely, infected individuals have a chance to recover at a certain recovery rate. To mitigate the risk of infection, susceptible individuals can choose to adopt various protective measures, such as home quarantine, wearing masks, or taking no action. In this study, our focus is primarily on the protective behavior of susceptible individuals. We make the assumption that individuals who are already infected will not engage in such protective behaviors. This assumption is based on the understanding that infected individuals may not have the same motivation or need to adopt protective measures since they are already infected.

Assume that the susceptible individuals have two possible behavioral options $a_1$ and $a_2$. Furthermore, we assume that all individuals who choose $a_1$ have the same infection rate, denoted as $\beta_1$, and similarly, all individuals who choose $a_2$ have the same infection rate $\beta_2$.
The recovery rate is assumed to be the same for all infected people and is denoted as $\gamma$. At time slot $t$, let $s(t)$ and $i(t)$ represent the fractions of susceptible and infected individuals, respectively.  Additionally, let $x_1(t)$ be the fraction of susceptible individuals adopting behavior $a_1$, and $x_2(t)$ represent the fraction adopting behavior $a_2$. Consequently, the dynamics of the disease spread can be described by the following differential equations:
\begin{equation}
\begin{split}
\dfrac{ds(t)}{dt} = &\gamma i(t)-s(t)i(t)\overline{\beta} \bar{k},
\\ \text{and} \quad \dfrac{di(t)}{dt} = &s(t)i(t)\overline{\beta} \bar{k}-\gamma i(t),
\end{split} \label{eq:SISmodeo}
\end{equation}
where $\overline{\beta}=\beta_1 x_1(t)+\beta_2 x_2(t)$. Note that $s(t)+i(t)=1$, (\ref{eq:SISmodeo}) can be rewritten as
\begin{equation}
\dfrac{di(t)}{dt} = i(t)(1-i(t))\overline{\beta} \bar{k}-\gamma i(t).
\label{eq:disease}
\end{equation}

\subsection{The Behavior Change Model}


The behavior change model quantifies the change in $\{ x_1(t),x_2(t) \}$, the proportion of susceptible individuals taking actions $\{ a_1,a_2 \}$ at time $t$. The change in $\{ x_1(t),x_2(t) \}$ may be due to two possible reasons. First, due to the change in the proportion of the infected individuals, or influenced by their neighbors, susceptible individuals select different actions at time $t$. We use $E_i(t)$ to denote the change in $x_i(t)$ due to such users' active behavior change, where $i=1,2$. Additionally, users' health states may change at time $t$, causing the change in $x_i(t)$. For example, a susceptible user taking action $a_i$ at time $t-1$ becomes infected at time $t$, or vice versa. We use $B_i(t)$ to denote the change in $x_i(t)$ due to the change in users' health states. Thus, we have
\begin{equation}
\dfrac{dx_i}{dt} = E_i(t)+B_i(t),\ i=1,2.
\label{eq:dx}
\end{equation}

In the following, we will discuss $E_i(t)$ and $B_i(t)$ one by one.
\subsubsection{Analysis of $E_i(t)$}

To model users' active behavior change due to their neighbors' influence and the percentage of  individuals currently being infected, 
we use evolutionary game theory to model individuals' decision-making processes. Evolutionary game theory offers a valuable tool for understanding how individuals make choices in response to environmental influences. The fundamental components of the evolutionary game theory include strategy, payoff, and strategy update rules, each of which will be introduced in detail in the following.

\textbf{Strategy:}
Each behavior $a_i$ is a strategy. In each time slot, $m$ percent of susceptible individuals are randomly selected as focal individuals. They observe their neighbors' behaviors and choose one possible strategy. Other susceptible individuals keep their actions unchanged.


\textbf{Payoff:}
Each individual would get a payoff based on their chosen action and their interactions with their neighbors. To model users' irrationality under conditions of uncertainty, we define the payoffs of different behaviors using the Prospect Theory (PT). 
The Prospect Theory seeks to explain how individuals make decisions under conditions of uncertainty. PT suggests that people do not always make rational choices, and their decisions can be influenced by various cognitive biases and subjective evaluations.
According to PT, if an individual selects a behavior $a_i$ that leads to $L$ potential outcomes $o_1, o_2, ..., o_L$ with corresponding probabilities $p_1, p_2, ..., p_L$, respectively, then the payoff $U_i$ associated with choosing $a_i$ can be defined as follows:
\begin{equation}
    U_i=\sum_{i=1...L}u(o_i)\pmb{\omega}(p_i),
    \label{PTequation}
\end{equation}
where $u(x)$ is the value function and $\pmb{\omega}(p)$ is the weighting function, which we will introduce one by one.

For the value function $u(x)$, according to the Prospect Theory, the perceived payoff differs from the actual payoff due to the presence of cognitive biases and subjective evaluations that individuals make when faced with uncertain outcomes. These biases and evaluations influence how individuals perceive and interpret information, leading to deviations from rational decision-making. The value function $u(x)$ models the relationship between the actual payoff $x$ and the perceived payoff $u(x)$. A widely used value function is the power function 
\begin{equation}
u(x)=
\left\{
\begin{aligned}
&x^\sigma,&& if\  x\ge 0\\
&-\lambda(-x)^\sigma,&& if\  x< 0\
\end{aligned}
\right.
\label{utilityfunc}
\end{equation}
where $\lambda$ reflects the individual's different sensitivity to gain and loss. The sensitivity coefficient $\sigma \in (0,1]$ determines the shape and curvature of the value function \cite{prelec2000compound}.


For the weighting function $\pmb{\omega}(p)$, when facing uncertainty, individuals' perceived probability is different from real probability. Individuals have a tendency to overestimate the probability of small risks, while underestimating the probability of larger risks \cite{1979Prospect}. According to the work in \cite{prelec1998probability}, the weighting function describing the relationship between the perceived probability and the real probability is:
\begin{equation}
\pmb{\omega}(p)=e^{(-(-lnp)^\alpha)},\ \ p\in[0,1],\ \alpha \in (0,1],
\end{equation}
where $p$ is the real probability of a certain outcome, $\pmb{\omega}(p)$ is the perceived probability, and $\alpha$ is the rationality coefficient. A smaller $\alpha$ indicates a higher degree of individuals' irrationality.


Fig.\ref{WeightFunction} plots the weighting function $\pmb{\omega}(p)$ with different $\alpha$. When the actual probability $p$ is very small, the perceived probability $\pmb{\omega}(p)$ is much larger than the actual probability $p$, showing the phenomenon that individuals tend to overestimate events/risks with small probabilities. On the contrary, when the actual probability is large, the perceived probability $\pmb{\omega}(p)$ is smaller than the actual probability $p$, modeling the phenomenon that individuals tend to underestimate events/risks with large probabilities. Also, from Fig.\ref{WeightFunction}, a smaller $\alpha$ (that is, a higher degree of individuals' irrationality) intensifies individuals' tendency to overestimate the probability of small risks while simultaneously underestimating the probability of larger risks.


    \begin{figure}[t]
	\centering
	\begin{minipage}[t]{0.8\linewidth}
		\centering
		\includegraphics[width=1\linewidth]{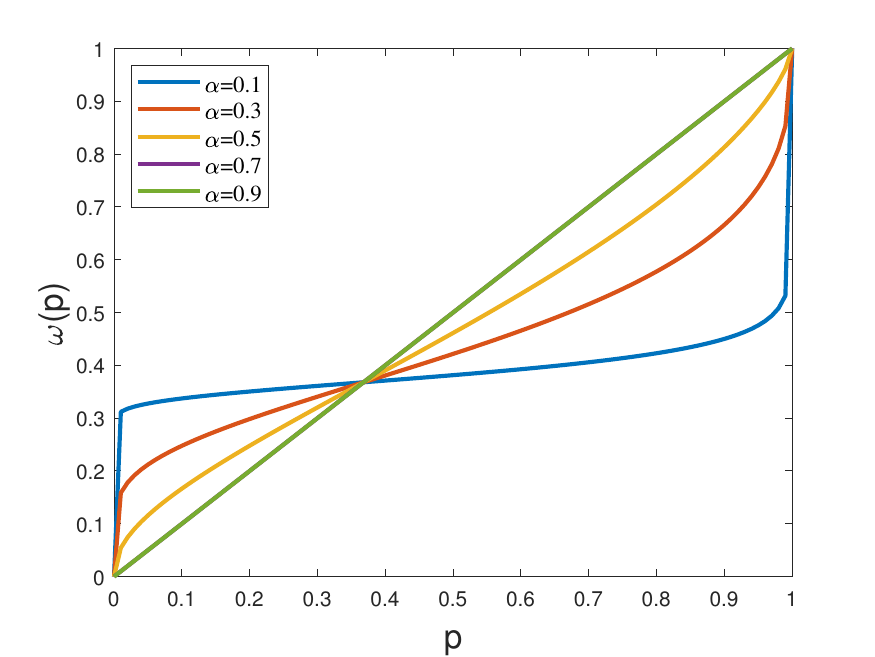}	
	\end{minipage}%
	\caption{The weighting function of different rational coefficients $\alpha$.}
	\label{WeightFunction}
\end{figure}


In our model, each susceptible individual who adopts behavior $a_i$ receives a payoff of $c_i$ with probability 1, which is determined by the behavior itself. Also note that an individual taking behavior $a_i$ has an approximate probability of $\bar{k}\beta_i i(t)$ to be infected at time $t$ \cite{0DIRECTED}, which incurs an additional loss of $c_n$ (we consider those epidemics with $\beta_i^t \ll 1$, such as SARS, MERS and common influenza\cite{wu2020nowcasting}, and we assume that $\bar{k}\beta_i < 1$).
To simplify the analysis, we assume that all individuals have the same form of the utility function $u(x)$ and the weighting function $\pmb{\omega}(p)$. Thus, from \eqref{PTequation}, the perceived payoff when choosing behavior $a_i$ is

\begin{equation}
U_i=u(c_n)\cdot\pmb{\omega}[\bar{k}\beta_i  i(t)]+u(c_i),\ i=1,2.
\end{equation}

\textbf{Strategy Update Rule:}
At each time step, $m$ percent of the susceptible individuals are randomly selected as focal individuals and update their actions, while the rest individuals' actions remain unchanged. These focal individuals are more likely to imitate the behaviors of individuals who have higher payoffs. Following the work in \cite{2004Coevolutionary}, a focal individual with strategy $a_i$ compares its payoff $U_i$ to the payoff $U_j$ of another randomly chosen individual with strategy $a_j$. The focal individual then updates its strategy to $a_j$ with probability:
\begin{equation} \label{eqn:actionchangeprob}
p(a_i\rightarrow a_j)=\frac{1}{2}+\frac{\omega}{2}\frac{1}{U_{max}}(U_j-U_i),
\end{equation}
where $\omega \in (0,1]$ measures the strength of selection, and $U_{max}$ is the normalization term.

For the two behavior $a_1$ and $a_2$, the probability that $x_1(t)$ increases by $\frac{1}{N_0(t)}$ is:
\begin{equation} \label{eqn:x1increaseby1}
p\left(\Delta x_1=\frac{1}{N_0(t)} \right) =x_1(t)x_2(t)p(a_2\rightarrow a_1),
\end{equation}
{where $N_0(t)$ is the number of susceptible individuals at $t$ and $N_0(t)=Ns(t)$.} Similarly, the probability that $x_1(t)$ decreases by $\frac{1}{N_0(t)}$ is:
\begin{equation} \label{eqn:x1decreaseby1}
p\left(\Delta x_1=-\frac{1}{N_0(t)}\right)=x_1(t)x_2(t)p(a_1\rightarrow a_2).
\end{equation}

Combining (\ref{eqn:actionchangeprob}), (\ref{eqn:x1increaseby1}) and (\ref{eqn:x1decreaseby1}), we have:
\begin{equation}
\begin{split}
E_1(t)=&mN_0(t)\left\{ p \left(\Delta x_1=\frac{1}{N_0(t)} \right) \times \frac{1}{N_0(t)} \right .\\
&\left .- p\left(\Delta x_1=-\frac{1}{N_0(t)} \right) \times \frac{1}{N_0(t)}\right \}\\
=&\frac{m\omega}{U_{max}}x_1(t)x_2(t) \left\{u(c_n)\cdot\pmb{\omega}[\bar{k}\beta_1  i(t)] \right.\\ 
&\left.+u(c_1)-u(c_n)\cdot\pmb{\omega}[\bar{k}\beta_2  i(t)]-u(c_2) \right \},
\label{eq:Ei}
\end{split}
\end{equation}
where $m$ is the fraction of individuals who are chosen as the focal individuals at each time.
We can get $E_2(t)$ in the same way, which is omitted here.

\begin{figure*}[t]
\begin{align}
&\dfrac{di}{dt} = i(t)(1-i(t))\overline{\beta} \bar{k}-\gamma i(t),\label{eq:M-model}\\
&\dfrac{dx_1}{dt} =x_1(t)x_2(t)\bar{k} i(t)(\beta_2-\beta_1) +\frac{m\omega}{U_{max}} x_1(t)x_2(t)(u(c_n)\cdot\pmb{\omega}[\bar{k}\beta_1  i(t)]+u(c_1)-u(c_n)\cdot\pmb{\omega}[\bar{k}\beta_2  i(t)]-u(c_2)).\nonumber
\end{align}
\underline{\hbox to \textwidth{}}
\end{figure*}

\subsubsection{Analysis of $B_i(t)$}

Even if individuals do not change their behavior, the proportion of different behaviors among susceptible individuals will change over time due to the change of their health states. 
Let $s_1(t)$ be the fraction of individuals who are both susceptible and adopt behavior $a_1$ at time $t$ among the entire population, that is, $s_1(t)=s(t)x_1(t)$ where $s(t)$ is the fraction of susceptible individuals and $x_1(t)$ is the fraction of individuals adopting $a_1$ among all the susceptible individuals. As $s(t)=s_1(t)+s_2(t)$, we have:
\begin{equation}
\begin{split}
B_{1}(t)=&\frac{d}{dt} \left ( \frac{s_{1}(t)}{s(t)}\right )
=\frac{d}{dt} \left ( \frac{s_{1}(t)}{s_1(t)+s_2(t)}\right )\\
=&\frac{s_{1}'(t)s_2(t)-s_{2}'(t)s_1(t)}{s^2(t)}.
\end{split}
\label{eq:Nit}
\end{equation}

Note that $s_1'(t)$, the first order derivative of $s_1(t)$, contains two parts. The first part is the change caused by the infection of susceptible users, while the second part is the change caused by the recovery of infected individuals. For the first part, we have $s_{1a}'(t) = -s_1(t)\beta_1 \bar{k} i(t)$, which represents the decrease in the fraction of individuals adopting behavior $a_1$ due to infection. For the second part, we assume that the recovered individuals would choose their behaviors based on the current ratio of different behaviors in the environment, as defined in \cite{2020Decisions}. Therefore, we have $s_{1b}'(t) = \gamma x_1(t)i(t)$, which represents the increase in the fraction of individuals adopting behavior $a_1$ due to the recovery of infected individuals. Combining these two parts, we obtain the overall change in the fraction of individuals adopting behavior $a_1$ as $s_1'(t) = s_{1a}'(t) + s_{1b}'(t)$. Then we have:
	\begin{equation}
	\begin{split}
	s_1(t)&=s(t)x_1(t),
	\\s_1'(t)
	&=-s(t)x_1(t)\beta_1 \bar{k} i(t)+\gamma x_1(t)i(t).
	\end{split}
	\label{eq:st}
	\end{equation}
 
We can get $s_2(t)$ and $s_2'(t)$ in the same way. Then based on \eqref{eq:Nit} and \eqref{eq:st}, we  have
	\begin{equation}
	B_1(t)=x_1(t)x_2(t)\bar{k} i(t)(\beta_2-\beta_1).
	\label{eq:Bi}
	\end{equation}
We can get $B_2(t)$ in the same way, which is omitted here.

Combining \eqref{eq:dx}, \eqref{eq:Bi} and \eqref{eq:Ei}, the complete differential equation describing the dynamics of individual behavior change is:
\begin{align}
    &\dfrac{dx_1}{dt} =x_1(t)x_2(t)\bar{k} i(t)(\beta_2-\beta_1)+\frac{m\omega}{U_{max}}x_1(t)x_2(t) \label{eq:behavior}\\
& \cdot\left\{u(c_n)\cdot\pmb{\omega}[\bar{k}\beta_1  i(t)] \right. \left.+u(c_1)-u(c_n)\cdot\pmb{\omega}[\bar{k}\beta_2  i(t)]-u(c_2) \right \}.  \nonumber
\end{align}
Note that $x_1(t)+x_2(t)=1$, we have $\frac{dx_2}{dt}=-\frac{dx_1}{dt}$.

Combining \eqref{eq:disease} and \eqref{eq:behavior}, we get the epidemic-behavior co-evolution model in \eqref{eq:M-model}. The first differential equation represents the dynamics of individuals' health states. The second equation models the changes in the proportions of individuals adopting the risky behavior.

\subsection{Steady State}
At the steady states of the above epidemic-behavior co-evolution model, both $i(t)$ and $\{ x_1(t),x_2(t) \}$ evolve to a stable state where there are no further changes in the proportions of infected individuals and the choices of different behaviors. As $x_{2}=1-x_1$, we omit the explicit mention of $x_{2}$ in the representation of the steady state, and use $(i^*,x_1^*)$ to represent the steady state. 
Similar to the work in \cite{2020Decisions} and the Lyapunov's first method \cite{lyapunov1992general}, we define the \emph{steady-state} of the epidemic-behavior co-evolution model as follows.
\begin{definition}
	The steady state $(i^*,x_1^*)$ satisfies:
	\begin{equation}
	\left.\dfrac{di}{dt}\right|_{i=i^*} = 0,\ \left.\dfrac{dx}{dt}\right|_{x_1=x_1^*} =0,\ Re(\lambda_1)<0, \ Re(\lambda_2)<0,
 \label{def:2behavior}
	\end{equation}
	where $\lambda_1$ and $\lambda_2$ are the eigenvalues of the Jacobian matrix 
    \begin{equation}
	\left.\begin{pmatrix}
	\frac{\partial i'}{\partial i} &\frac{\partial i'}{\partial x_1}\\
	\frac{\partial x_1'}{\partial i} &\frac{\partial x_1'}{\partial x_1}\\
	\end{pmatrix}\right|_{i=i^*,x_1=x_1^*},
    \end{equation}
    and $Re(x)$ means the real part of $x$.
\end{definition}

As (\ref{def:2behavior}) are often difficult to solve, we often use numerical methods to find the solution of (\ref{def:2behavior}) and the steady states of the co-evolution of disease spread and behavioral choice.

\section{The steady state Analysis}

During the COVID-19 crisis, governments have been promoting home isolation as a crucial measure to contain the spread of the disease. Home isolation helps significantly reduces the risk of being infected and controls the pandemic, while it also leads to substantial economic loss as well as impacts people's physical and mental well-being. During the pandemic, individuals need to choose between high-cost low-risk conservative behavior and low-cost high-risk risky behavior. Their decisions are often influenced by the severity of the epidemic and the potential loss due to home isolation. People tend to choose self-isolation when the pandemic poses a greater threat to their health; while they may be inclined to choose to go out when the loss due to home isolation is too high (e.g., losing their jobs and income).

Therefore, in this section, based on our model in the previous section, we analyze the dynamics of people's choices between two behaviors: risky behavior (going out) represented by $a_1$, and conservative behavior (home isolation) represented by $a_2$, and the evolution of the epidemic.
We assume that the infection rate for conservative behavior is $\beta_2 = 0$ (as isolation ensures no infection risk). 

Then we analyze the steady state of our model. To find the steady state, 
we first introduce Lemma \ref{lemma1}.
\begin{lemma}
	The necessary and sufficient conditions that $(i^*,x_1^*)$ is a steady state is:
	\begin{equation}
	\begin{split}
	&\left.\dfrac{di}{dt}\right|_{i=i^*} = 0,\left.\dfrac{dx_1}{dt}\right|_{x_1=x_1^*} =0,\\
	&\left.(\dfrac{\partial i'}{\partial i}+\dfrac{\partial x_1'}{\partial x_1})\right|_{i=i^*,x_1=x_1^*}<0,\\ \text{and} \quad
	&\left.(\dfrac{\partial i'}{\partial i}\dfrac{\partial x_1'}{\partial x_1}-\dfrac{\partial i'}{\partial x_1}\dfrac{\partial x_1'}{\partial i})\right|_{i=i^*,x_1=x_1^*}>0.
	\end{split}
	\end{equation}
	\label{lemma1}
\end{lemma}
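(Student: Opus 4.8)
The plan is to recognize this lemma as a restatement of the Routh--Hurwitz stability criterion specialized to a $2\times 2$ system. The first two conditions---that both time derivatives vanish at $(i^*,x_1^*)$---are identical to the equilibrium requirements already appearing in the definition of the steady state, so nothing needs to be argued there. The entire content of the lemma is the equivalence between the eigenvalue conditions $Re(\lambda_1)<0$, $Re(\lambda_2)<0$ and the two trace-determinant inequalities imposed on the Jacobian $J$ evaluated at $(i^*,x_1^*)$.

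First I would write the characteristic polynomial of the $2\times 2$ Jacobian as $\lambda^2 - T\lambda + D = 0$, where $T = \left.(\frac{\partial i'}{\partial i}+\frac{\partial x_1'}{\partial x_1})\right|_{i=i^*,x_1=x_1^*}$ is the trace and $D = \left.(\frac{\partial i'}{\partial i}\frac{\partial x_1'}{\partial x_1}-\frac{\partial i'}{\partial x_1}\frac{\partial x_1'}{\partial i})\right|_{i=i^*,x_1=x_1^*}$ is the determinant. By Vieta's formulas the two eigenvalues satisfy $\lambda_1+\lambda_2 = T$ and $\lambda_1\lambda_2 = D$. The goal then becomes purely algebraic: to show that both roots of this real quadratic have negative real part if and only if $T<0$ and $D>0$, which are exactly the last two conditions stated in the lemma.

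The argument splits into two cases according to the sign of the discriminant $\Delta = T^2 - 4D$. When $\Delta \geq 0$ the eigenvalues are real, and two real numbers are both negative precisely when their sum $T$ is negative and their product $D$ is positive; this yields both directions of the equivalence directly. When $\Delta < 0$ the eigenvalues form a complex-conjugate pair with common real part $T/2$, and in this regime $D = \lambda_1\lambda_2 = |\lambda_1|^2 > 0$ holds automatically, so the pair has negative real part exactly when $T<0$. Combining the two cases establishes that $Re(\lambda_1)<0$ together with $Re(\lambda_2)<0$ is equivalent to the single pair of inequalities $T<0$ and $D>0$, which completes the proof of the lemma.

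This is a classical result, so I do not anticipate a substantive obstacle; the only point demanding care is the bookkeeping in the complex case, where one must observe that a negative discriminant already forces $D>0$, so that the determinant inequality is not an independent constraint there. I would also be careful to argue both the necessity and the sufficiency directions explicitly, since the lemma is phrased as a \emph{necessary and sufficient} characterization rather than a one-way implication.
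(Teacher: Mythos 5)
Your proposal is correct and follows essentially the same route as the paper: both reduce the lemma to the classical trace--determinant (Routh--Hurwitz) characterization of the roots of the $2\times 2$ Jacobian's characteristic polynomial, the paper via the explicit quadratic formula for $\lambda_{1,2}$ and you via Vieta's formulas with a case split on the discriminant. If anything, your version is slightly more complete, since the paper only argues the necessity direction explicitly while you handle both directions.
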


\begin{proof} By the definition of the steady state, we first have $\frac{di}{dt} = 0$ and $\frac{dx_1}{dt} =0$.  For simplicity, we let 
\begin{equation}
    \begin{aligned}
        &P(i^*,x_1^*) = \left.(\dfrac{\partial i'}{\partial i}+\dfrac{\partial x_1'}{\partial x_1})\right|_{i=i^*,x_1=x_1^*},\\ \mbox{and} \quad
        &Q(i^*,x_1^*) = \left.(\dfrac{\partial i'}{\partial i}\dfrac{\partial x_1'}{\partial x_1}-\dfrac{\partial i'}{\partial x_1}\dfrac{\partial x_1'}{\partial i})\right|_{i=i^*,x_1=x_1^*}.
    \end{aligned}
    \label{eq:PQ}
\end{equation}

Then we have:
\begin{equation}
\lambda_1=\frac{P+\sqrt{P^2-4Q}}{2},\ \mbox{and} \; \lambda_2=\frac{P-\sqrt{P^2-4Q}}{2}.
\end{equation}

From \eqref{def:2behavior}, the steady state requires $Re(\lambda_1)<0$ and $Re(\lambda_2)<0$. If $Q\le0$, then it implies $Re(\lambda_1)\ge0$, which contradicts the steady state conditions. Therefore, we must have $Q>0$. Additionally, since $Q>0$ and $Re(\lambda_1)<0$, it follows that $P<0$. Therefore, we have $P<0$ and $Q>0$, which means $\frac{\partial i'}{\partial i}+\frac{\partial x_1'}{\partial x_1}<0$ and $\frac{\partial i'}{\partial i}\frac{\partial x_1'}{\partial x_1}-\frac{\partial i'}{\partial x_1}\frac{\partial x_1'}{\partial i}>0$. 
\end{proof}

From Lemma \ref{lemma1}, we can have Theorem \ref{th2} about the steady state $(i^*,x_1^*)$.

\begin{theorem}
	The steady state $(i^*,x_1^*)$ of \eqref{eq:M-model} satisfies:
	\begin{equation}
	(i^*,x_1^*)=
	\left\{
	\begin{aligned}
	&(0,1),&& \text{if}\ \bar{k}<\frac{\gamma}{\beta_1},\\
	&\left(1-\frac{\gamma}{\bar{k}\beta_1},1\right),&& \text{if}\ \bar{k}>\frac{\gamma}{\beta_1},\Phi < 0, \\
	&\left(i^{(2)},\frac{\gamma}{(1-i^{(2)})\bar{k}\beta_1}\right),&& \text{if}\ \bar{k}>\frac{\gamma}{\beta_1}, \Phi \ge 0,
	\end{aligned}
	\right.
	\label{eq:steadyPT}
	\end{equation}
	\label{th2}
\end{theorem}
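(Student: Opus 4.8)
The plan is to specialize the co-evolution system \eqref{eq:M-model} to $\beta_2=0$, list every fixed point, and then classify each one with the trace/determinant criterion of Lemma \ref{lemma1}. With $\beta_2=0$ we have $\overline{\beta}=\beta_1 x_1$, and since the Prelec weight obeys $\pmb{\omega}(0)=\lim_{p\to0^+}e^{-(-\ln p)^\alpha}=0$ the system factors as
\[
i' = i\big[(1-i)\beta_1 x_1\bar{k}-\gamma\big], \qquad x_1' = x_1(1-x_1)\,g(i),
\]
where $g(i)=-\bar{k}\beta_1 i+\frac{m\omega}{U_{max}}\big(u(c_n)\,\pmb{\omega}[\bar{k}\beta_1 i]+u(c_1)-u(c_2)\big)$. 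Imposing $i'=0$ forces $i=0$ or $x_1=\gamma/[(1-i)\bar{k}\beta_1]$, while $x_1'=0$ forces $x_1\in\{0,1\}$ or $g(i)=0$. Intersecting the two gives exactly four candidates: $(0,0)$, $(0,1)$, $(\hat{i},1)$ with $\hat{i}:=1-\tfrac{\gamma}{\bar{k}\beta_1}$, and the interior point $\big(i^{(2)},\gamma/[(1-i^{(2)})\bar{k}\beta_1]\big)$ where $i^{(2)}$ solves $g(i)=0$.

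I would next record two monotonicity facts that organize the whole classification. Because infection is a loss, $u(c_n)<0$, and $\pmb{\omega}$ is strictly increasing on $(0,1)$, so $g$ is strictly decreasing; hence its root $i^{(2)}$, when it exists, is unique. Because going out carries a larger base payoff than isolating, $c_1>c_2$ and $u$ is increasing, so $g(0)>0$. Combining these, the feasibility requirement $x_1^{(2)}\le1$ for the interior point is equivalent to $i^{(2)}\le\hat{i}$, which by monotonicity of $g$ is equivalent to the single sign condition $g(\hat{i})\le0$. This sign is exactly what $\Phi$ records in \eqref{eq:steadyPT}, so $\Phi\ge0$ marks the regime in which the interior equilibrium is admissible and $\Phi<0$ the regime in which it is not.

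I would then apply Lemma \ref{lemma1} to each candidate. At the boundary equilibria the Jacobian is triangular, so the eigenvalues are immediate: at $(0,1)$ they are $\bar{k}\beta_1-\gamma$ and $-g(0)<0$, giving stability precisely when $\bar{k}<\gamma/\beta_1$ (Case 1); at $(\hat{i},1)$, which is feasible only for $\bar{k}>\gamma/\beta_1$, they are $-\hat{i}\,\bar{k}\beta_1<0$ and $-g(\hat{i})$, giving stability precisely when $g(\hat{i})>0$, i.e. $\Phi<0$ (Case 2). The point $(0,0)$ always has the positive eigenvalue $g(0)$ and is rejected. The interior point is the only one whose Jacobian is genuinely full; there I use $g(i^{(2)})=0$ to annihilate the $\partial x_1'/\partial x_1$ entry and obtain $P=-\gamma i^{(2)}/(1-i^{(2)})<0$ together with $Q=-i^{(2)}(1-i^{(2)})\bar{k}\beta_1\,x_1^{(2)}(1-x_1^{(2)})\,g'(i^{(2)})>0$, the positivity following from $g'(i^{(2)})<0$. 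Thus whenever it is admissible ($\Phi\ge0$) it is automatically stable (Case 3).

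The main obstacle is the interior equilibrium. Its location $i^{(2)}$ is the root of a transcendental equation built from the Prelec weighting function, so no closed form is available; existence, uniqueness and feasibility must instead be argued from the monotonicity and boundary value of $g$ alone. The more delicate point is proving the three cases are mutually exclusive and jointly exhaustive: I must show that the admissibility of the interior point and the stability of the boundary point $(\hat{i},1)$ are controlled by one and the same quantity $g(\hat{i})$ (equivalently $-\Phi$), so that for every parameter setting exactly one of the four candidates survives the Lemma \ref{lemma1} test. Establishing this complementarity---rather than any individual eigenvalue computation---is the heart of the argument.
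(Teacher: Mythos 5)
Your proposal is correct and follows essentially the same route as the paper: enumerate the four fixed-point candidates of the $\beta_2=0$ system, use the monotonicity of $g$ (the paper's $f_2$) with $g(0)>0$ to tie the interior point's admissibility to the sign of $g\bigl(1-\tfrac{\gamma}{\bar{k}\beta_1}\bigr)=-\Phi$, and classify each candidate via the trace/determinant criterion of Lemma \ref{lemma1}. Your reading off eigenvalues from the triangular Jacobian at the boundary points is just a slightly cleaner presentation of the paper's explicit $P,Q$ computations, not a different argument.
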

where 

$\Phi {\buildrel \triangle \over =}\ -k_0(u(c_1)-u(c_2))-(\gamma-\bar{k}\beta_1) -k_0u(c_n)\cdot\pmb{\omega} [\bar{k}\beta_1 -\gamma]$, $k_0=\frac{m\omega}{U_{max}}$,

and $i^{(2)}$ is the solution of the equation 
\begin{equation} \label{eqn:i2def}
k_0u(c_n)\cdot\pmb{\omega}[\bar{k}\beta_1i^{(2)}]-\bar{k}\beta_1i^{(2)}+k_0(u(c_1)-u(c_2))=0.
\end{equation}

\begin{proof} By setting $\frac{di}{dt} = 0,\frac{dx_1}{dt} =0$, we can get four steady state candidates: $(0,0)$, $(0,1)$, $\left(1-\frac{\gamma}{\bar{k}\beta_1},1\right)$ and $\left(i^{(2)},\frac{\gamma}{(1-i^{(2)})\bar{k}\beta_1}\right)$. Then, we check whether they are stable. Note that in reality, the payoff of going out should be larger than quarantine, we have $c_1-c_2>0$ (that means $u(c_1)-u(c_2)>0$). Since $c_n$ is the loss of being infected, so $c_n<0$ (that means $u(c_n)<0$). Based on Lemma \ref{lemma1}, the steady state should satisfy $P(i^*,x_1^*)<0$ and $Q(i^*,x_1^*) >0$, where $P(\cdot,\cdot)$ and $Q(\cdot,\cdot)$ are defined in \eqref{eq:PQ}.
Then we have:
\begin{itemize}
	\item For the state $(0,0)$, we have $P=k_0(u(c_1)-u(c_2))-\gamma$ and $Q=-\gamma k_0(u(c_1)-u(c_2))$. Since $u(c_1)-u(c_2)>0$, $\gamma>0$ and $k_0>0$, we have $Q<0$.
	So $(0,0)$ is unstable.
	
	\item For the state $(0,1)$, we have $P=-(\gamma-\bar{k}\beta_1)-k_0(u(c_1)-u(c_2))$ and $Q=k_0(\gamma-\bar{k}\beta_1)(u(c_1)-u(c_2))$. As $u(c_1)-u(c_2)>0$ and $k_0>0$, if $Q>0$, we have $\gamma-\bar{k}\beta_1>0$, which is equivalent to $\bar{k}<\frac{\gamma}{\beta_1}$.  
	When $\gamma-\bar{k}\beta_1>0$, we also have $P<0$.
	Therefore, when $\bar{k}<\frac{\gamma}{\beta_1}$, $(0,1)$ is a steady state.
	
	\item For the state $\left(1-\frac{\gamma}{\bar{k}\beta_1},1\right)$, we have:
	\begin{equation}
	\begin{split}
	P=&-k_0(u(c_1)-u(c_2))-k_0u(c_n)\cdot\pmb{\omega} [\bar{k}\beta_1 -\gamma],\\
	Q=&(\gamma-\bar{k}\beta_1)\{-k_0(u(c_1)-u(c_2))\\
	&-(\gamma-\bar{k}\beta_1)-k_0u(c_n)\cdot\pmb{\omega} [\bar{k}\beta_1 -\gamma]\}.
	\end{split}
	\end{equation}
	
	Since $1-\frac{\gamma}{\bar{k}\beta_1}$ is the proportion of infected individuals, we have $1-\frac{\gamma}{\bar{k}\beta_1}>0$, which is equivalent to $\bar{k}>\frac{\gamma}{\beta_1}$. (If $1-\frac{\gamma}{\bar{k}\beta_1}=0$, then the state $\left(1-\frac{\gamma}{\bar{k}\beta_1},1\right)$ becomes $(0,1)$, which we have discussed).
	To satisfy $P<0$ and $Q>0$, we first define
	\begin{equation}
	\begin{aligned}
	h_1{\buildrel \triangle \over =}&-k_0(u(c_1)-u(c_2))-k_0u(c_n)\cdot\pmb{\omega} [\bar{k}\beta_1 -\gamma],\\ \mbox{and} \;
	h_2{\buildrel \triangle \over =}&-k_0(u(c_1)-u(c_2))-(\gamma-\bar{k}\beta_1)\\
	&-k_0u(c_n)\cdot\pmb{\omega} [\bar{k}\beta_1 -\gamma].
	\end{aligned}
	\end{equation}
	Note that $P<0$ and $Q>0$ are equivalent to $h_1<0$ and $h_2<0$.	
	It is obvious that $h_1=h_2-(\bar{k}\beta_1-\gamma)$.
	Since $(\bar{k}\beta_1-\gamma)>0$, if $h_2<0$, then $h_1<0$, and $P<0$, $Q>0$. Note that $\Phi=h_2$. Therefore, $\left(1-\frac{\gamma}{\bar{k}\beta_1},1\right)$ is a steady state when $\bar{k}>\frac{\gamma}{\beta_1}$ and $\Phi<0$.
	
	\item 

		First, as $i^{(2)}$ and $\frac{\gamma}{(1-i^{(2)})\bar{k}\beta_1}$ are the proportion of infected individuals and the proportion of individuals who choose risky behavior, by definition, we have $0\le i^{(2)} \le 1$ and $0\le \frac{\gamma}{(1-i^{(2)})\bar{k}\beta_1} \le 1$, which is equivalent to:
	\begin{equation}
		0\le i^{(2)} \le 1-\frac{\gamma}{\bar{k}\beta_1}.
		\label{eq:condi}
	\end{equation}
	
	From (\ref{eq:condi}), it is obvious that we should have $0<1-\frac{\gamma}{\bar{k}\beta_1}$, which means $\bar{k}>\frac{\gamma}{\beta_1}$ (If $0=1-\frac{\gamma}{\bar{k}\beta_1}$, then the state $\left(i^{(2)},\frac{\gamma}{(1-i^{(2)})\bar{k}\beta_1}\right)$ becomes $(0,1)$, which we have discussed). Then, we define $f_2(x){\buildrel \triangle \over =} k_0u(c_n)\cdot\pmb{\omega}(\bar{k}\beta_1 x)-\bar{k}\beta_1 x+k_0(u(c_1)-u(c_2))$.
	From \eqref{eqn:i2def}, $i^{(2)}$ is the solution of $f_2(x)$ so we have $f_2(i^{(2)})=0$. It is obvious that $f_2(x)$ is a monotonically decreasing function of $i^{(2)}$. Then $0\le i^{(2)} \le 1-\frac{\gamma}{\bar{k}\beta_1}$ is equivalent to:
	\begin{equation}
		f_2(0)\ge 0,\ \ f_2(1-\frac{\gamma}{\bar{k}\beta_1})\le 0.
	\end{equation}
	
	 Note that $u(c_1)-u(c_2) >0$ and $k_0>0$. Therefore we have $f_2(0)=k_0(u(c_1)-u(c_2))> 0$. So when $\bar{k}>\frac{\gamma}{\beta_1}$ and $f_2(1-\frac{\gamma}{\bar{k}\beta_1})\le 0$, we have $0\le i^{(2)} \le 1$ and $0\le \frac{\gamma}{(1-i^{(2)})\bar{k}\beta_1} \le 1$.
	
	Then, if $\left(i^{(2)},\frac{\gamma}{(1-i^{(2)})\bar{k}\beta_1}\right)$ is a steady state, it should satisfies $P<0$ and $Q>0$, so we have:
	
	\begin{equation}
	\begin{split}
	P=&-\gamma+(1-2i^{(2)})\bar{k}\beta_1x_1^{(2)}=-\bar{k}\beta_1 i^{(2)} x_1^{(2)},\\
	Q=& -\bar{k}\beta_1 i^{(2)} x_1^{(2)}(1-i^{(2)})(1-x_1^{(2)})\\
	&\cdot(k_0u(c_n)\pmb{\omega}'[\bar{k}\beta_1 i^{(2)}]\bar{k}\beta_1-\bar{k}\beta_1).
	\end{split}
	\end{equation}
	
	It is obvious that $P<0$.
	Since $\pmb{\omega}'[\bar{k}\beta_1 i^{(2)}] > 0$ and $k_0u(c_n)<0$, we have $k_0u(c_n)\pmb{\omega}'[\bar{k}\beta_1 i^{(2)}]\bar{k}\beta_1-\bar{k}\beta_1<0$. So $Q>0$. 
	
	Therefore, when $\bar{k}>\frac{\gamma}{\beta_1}$ and $f_2(1-\frac{\gamma}{\bar{k}\beta_1})\le 0$, $\left(i^{(2)},\frac{\gamma}{(1-i^{(2)})\bar{k}\beta_1}\right)$ is steady state. Note that $f_2(1-\frac{\gamma}{\bar{k}\beta_1})=-\Phi$, therefore, when 
	$\bar{k}>\frac{\gamma}{\beta_1}$ and $\Phi\ge 0$, $\left(i^{(2)},\frac{\gamma}{(1-i^{(2)})\bar{k}\beta_1}\right)$ is steady state.
\end{itemize}
\end{proof}

\begin{figure*}[t]
	\subfigure[The proportion of infected individuals]{
		\begin{minipage}[t]{0.5\linewidth}
			\centering
			\includegraphics[width=0.8\linewidth]{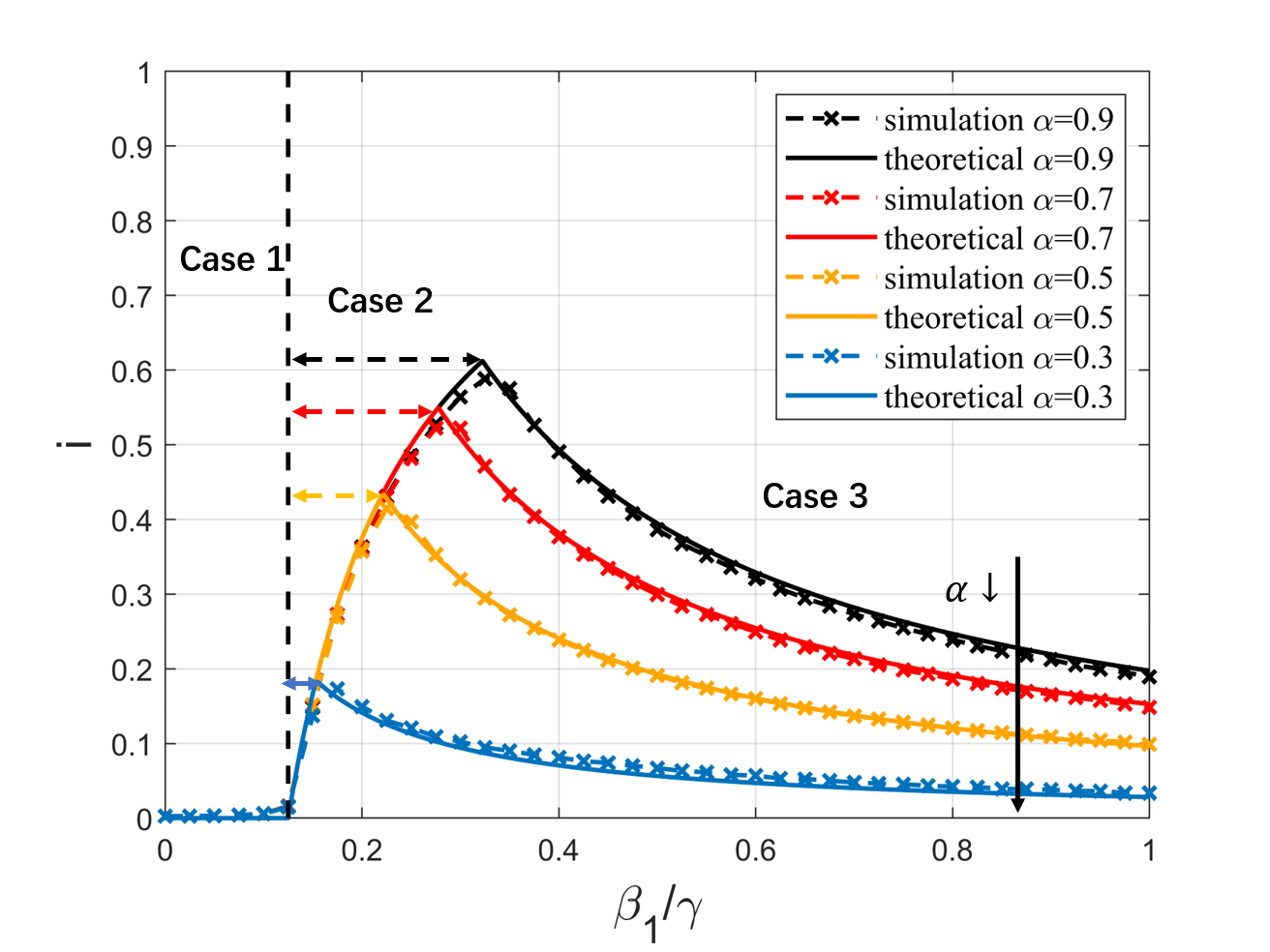}
			\label{change_beta1}
		\end{minipage}%
	}%
	\subfigure[The proportion of individuals choosing risky behavior]{
		\begin{minipage}[t]{0.5\linewidth}
			\centering
			\includegraphics[width=0.8\linewidth]{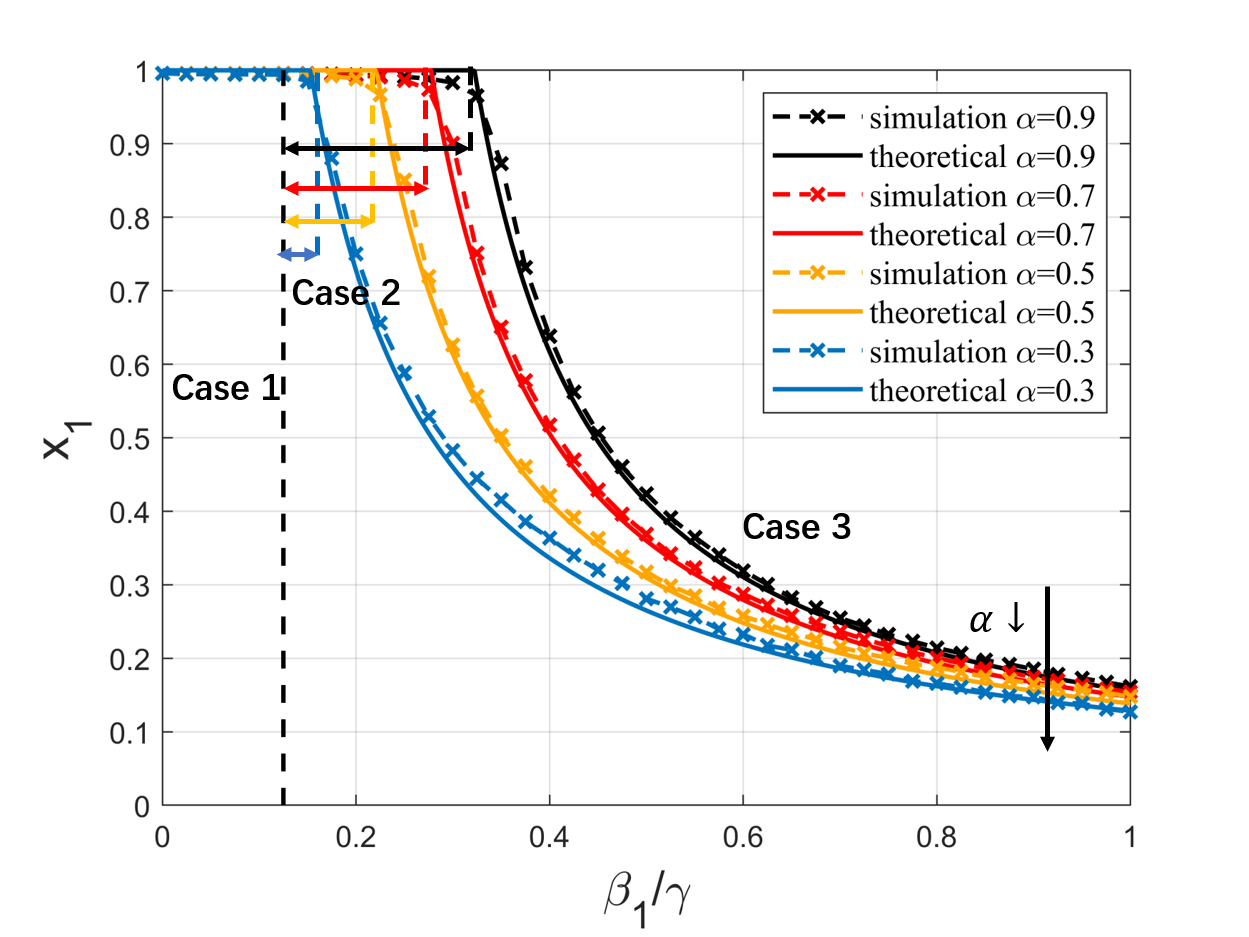}
			\label{change_beta2}
		\end{minipage}%
	}%
	
	\centering
	\caption{The simulation and theoretical results of steady states with different disease spread ability
	}
	\label{change_beta}
\end{figure*}

\begin{figure*}[t]
	\centering
	\subfigure[The proportion of infected individuals]{
		\begin{minipage}[t]{0.5\linewidth}
			\centering
			\includegraphics[width=0.8\linewidth]{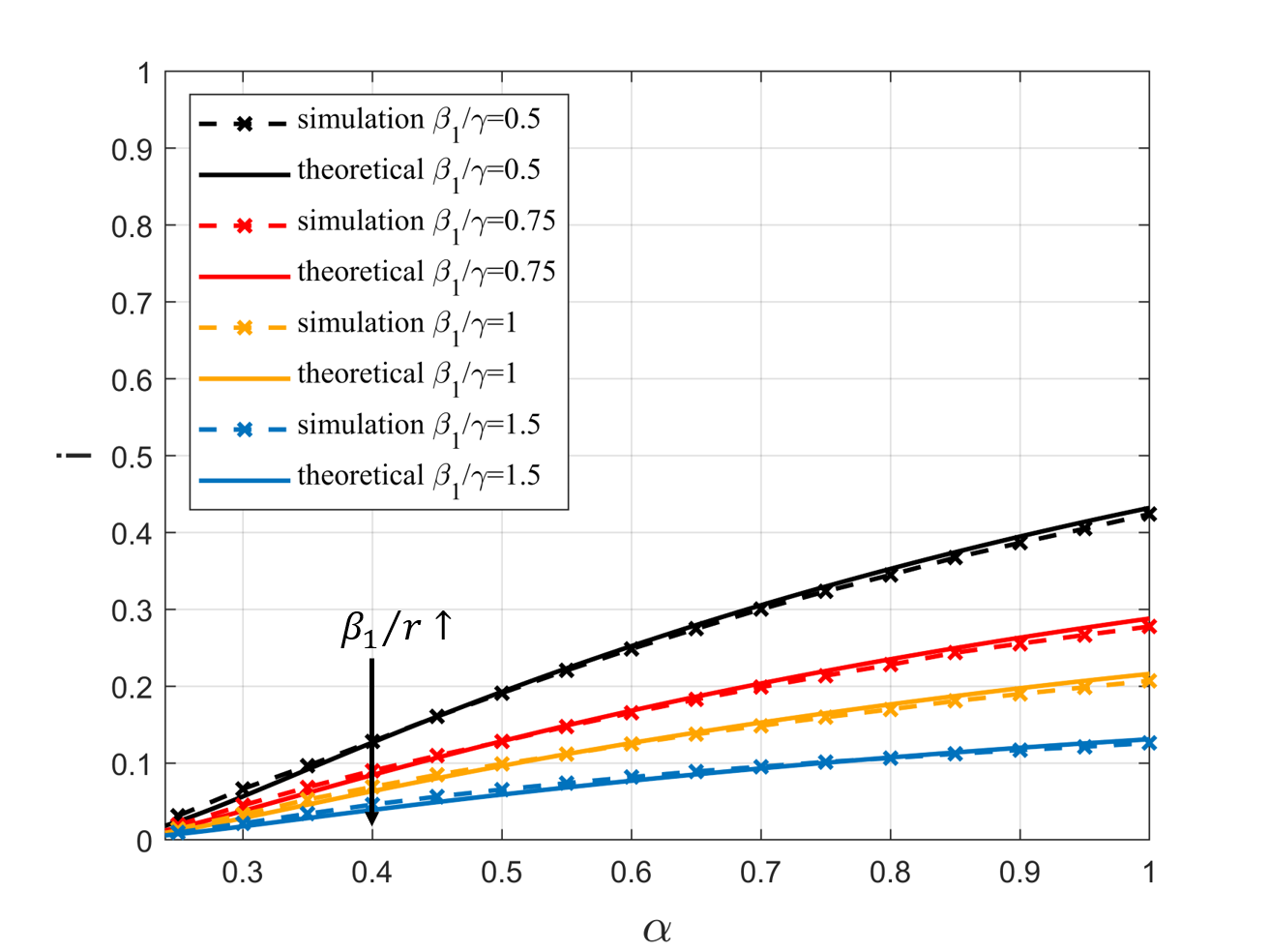}
			\label{change_alpha_low1}
		\end{minipage}%
	}%
	\subfigure[The proportion of individuals choosing risky behavior]{
		\begin{minipage}[t]{0.5\linewidth}
			\centering
			\includegraphics[width=0.8\linewidth]{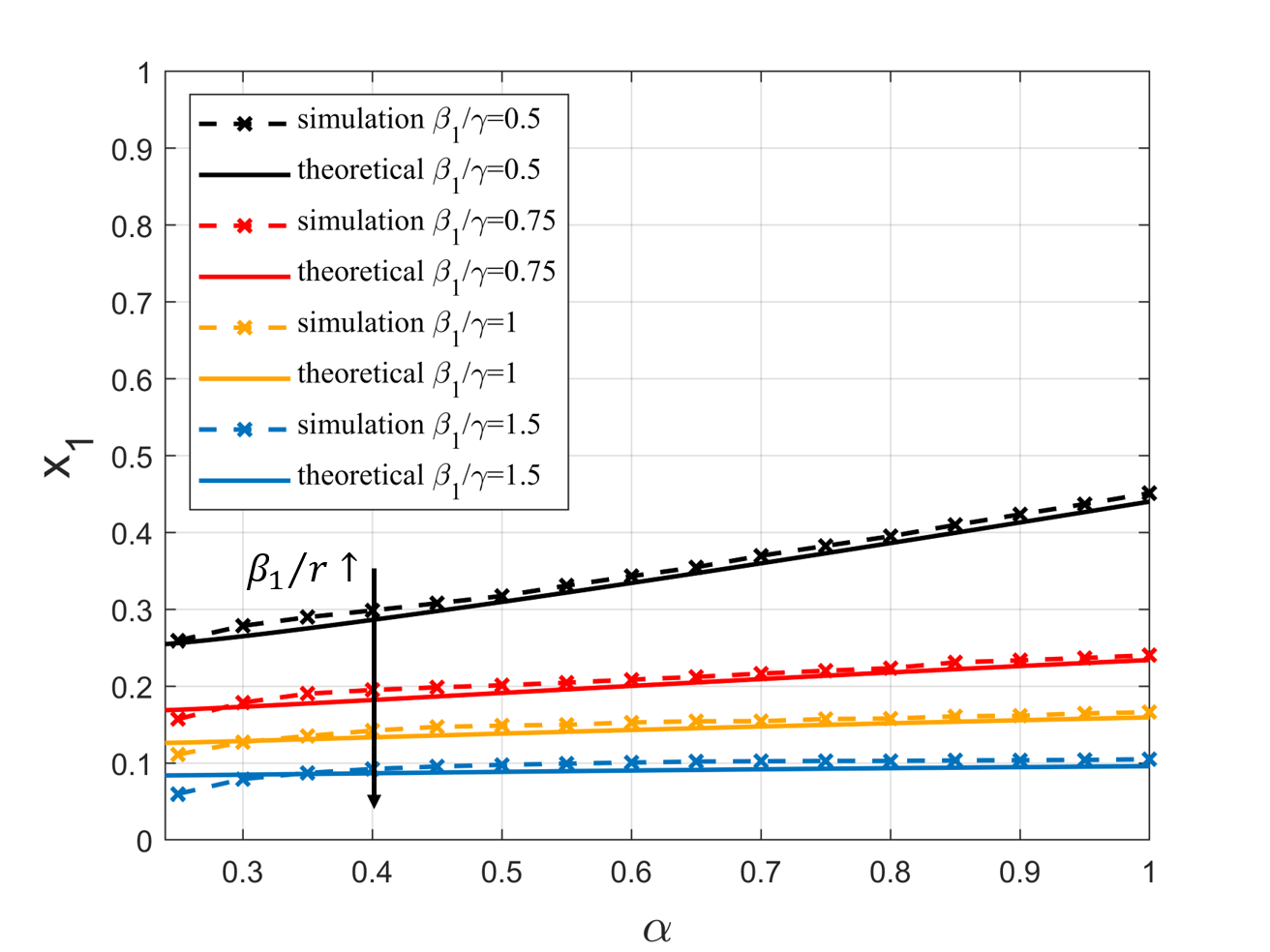}
			\label{change_alpha_low2}
		\end{minipage}%
	}%

        \centering
	\caption{The simulation and theoretical results of steady states with different disease rationality coefficients $\alpha$.
	}
 \label{alpha}
 \end{figure*}
From Theorem \ref{th2}, our epidemic-behavior co-evolution model has possible three steady states, and each corresponds to a particular situation in reality:
\begin{itemize}
	\item Case 1: 
The steady state $(0,1)$ represents an extreme situation where the infection rate is too low, and the disease would eventually die out even without any intervention or change of behavior. In this case, all individuals choose risky behavior. For the steady state $(0,1)$, the stability condition is $\bar{k}<\frac{\gamma}{\beta_1}$, which is equivalent to $\frac{\beta_1}{\gamma}<\frac{1}{\bar{k}}$. Here, $\frac{1}{\bar{k}}$ represents the epidemic threshold of a regular network with degree $\bar{k}$ \cite{2007Epidemic}. If $\frac{\overline{\beta}}{\gamma}<\frac{1}{\bar{k}}$, the epidemic would die out; otherwise, it would spread out \cite{2007Epidemic}. Since $\overline{\beta} \in [0,\beta_1]$, the disease would die out regardless of how people choose their behavior. In this situation, all individuals would choose the risky behavior in the steady state. Therefore, when $\bar{k}<\frac{\gamma}{\beta_1}$, the stable point is $i=0$ and $x_1=1$.
 
 

	\item Case 2: 
For the steady state $\left(1-\frac{\gamma}{\bar{k}\beta_1},1\right)$, the first constraint is $\bar{k}>\frac{\gamma}{\beta_1}$, which means if all individuals choose the risky behavior, the disease will spread out. The second constraint is $\Phi < 0$. When all individuals choose to go out with $x_1=1$, the percentage of infected individuals will reach the stable state $\bar{i}=1-\frac{\gamma}{\bar{k}\beta_1}$, which is the maximum extent to which the disease can spread (please refer to Appendix \ref{adx:A} for a proof). If for all possible $i$ in the range $[0, \bar{i}]$, we have $\frac{dx_1}{dt}|_{x_1\neq 0,1}>0$ (if $x_1=0 \text{ or } 1$, $\frac{dx_1}{dt}=0$), then more people will choose risky behavior as time goes on, and $x_1=1$ with all individuals choosing the risky behavior at the steady state. This may happen when the risky behavior's payoff is much higher than the conservative behavior's with $c_1\gg c_2$, or when the cost of being infected $c_n$ is very low. If $\frac{dx_1}{dt}|_{i=\bar{i},x_1\neq 0,1}>0$, which is equivalent to $\Phi<0$, then for all $i\in[0, \bar{i}]$, we have $\frac{dx_1}{dt}|_{x_1\neq 0,1}>0$. Therefore, when $\bar{k}>\frac{\gamma}{\beta_1}$ and $\Phi<0$, $\left(1-\frac{\gamma}{\bar{k}\beta_1},1\right)$ is the steady state.

	
	\item Case 3: 
 The steady state $\left(i^{(2)},\frac{\gamma}{(1-i^{(2)})\bar{k}\beta_1}\right)$ represents the general situation other than the two extreme cases. In this scenario, the disease does not extinct, nor does it spread to the maximum extent, and at the steady state, $0<i^{(2)}<\bar{i}$ percent of individuals will be infected. Meanwhile, $0<\frac{\gamma}{(1-i^{(2)})\bar{k}\beta_1}<1$ of susceptible individuals will choose the risky behavior. When $\bar{k}>\frac{\gamma}{\beta_1}$, and $\Phi \geq 0$, the state $\left(i^{(2)},\frac{\gamma}{(1-i^{(2)})\bar{k}\beta_1}\right)$ is stable.
	
	
		
\end{itemize}

\section{Simulation Results}
In this section, we run simulations to verify our analysis in the above sections. The network we used is a regular network with 500 nodes and a fixed degree of 8. Since the risky behavior of going out is the default behavior that most people take in their daily lives, we set the payoff of this behavior to 0, while the conservative behavior of self-isolation can be regarded as behavior with losses and a negative payoff. As an example, we set the payoff of the risky behavior as $c_1=0$, the payoff for the conservative behavior as $c_2=-1$, and the disease loss as $c_n=-10$. For the value function, we set $\sigma=0.65$ according to \cite{prelec2000compound}. 
In our simulations, we set $\lambda=1$, and observe the same trend for other values of $\lambda$.

Fig. \ref{change_beta} shows the results of steady state with changing $\frac{\beta_1}{\gamma}$ which reflects the spread ability of the disease. A higher value of $\frac{\beta_1}{\gamma}$ indicates a greater ability of the disease to spread within the population. First, from Fig. \ref{change_beta}, the theoretical results fit well with the simulation results. The three cases in Theorem \ref{th2} can be clearly seen in Fig. \ref{change_beta}. When $\frac{\beta_1}{\gamma}$ is low, even if all individuals choose risky behavior, the disease will die out, which corresponds to Case 1 in the figure. When $\frac{\beta_1}{\gamma}$ is high enough so that the disease would spread out, it enters Case 2. As $\frac{\beta_1}{\gamma}$ increases, the proportion of infected individuals in the steady state gradually increases, and at this time, all individuals would still choose risky behaviors since it has a higher payoff. As $\frac{\beta_1}{\gamma}$ continues increasing, it enters Case 3, where more and more people choose self-isolation due to the high risk of infection. In this case, although $\frac{\beta_1}{\gamma}$ become higher, more people choose to self-isolate due to the increased risk of infection, which makes the proportion of infected individuals in the steady state gradually decrease.

Moreover, from Fig. \ref{change_beta}, given $\frac{\beta_1}{\gamma}$, individuals with a lower rationality coefficient $\alpha$ (meaning that such individuals are more irrational) have a smaller probability to be in Case 2, 
which means that irrational individuals are more inclined to choose conservative behaviors during a pandemic. This may be because from the Prospect Theory, irrational users tend to overestimate the small probability of being infected, and thus they tend to be more conservative.
We also run simulations on the steady state of Case 3 under different rationality coefficients $\alpha$ and the results are shown in Fig. \ref{alpha}. It can be seen that fewer individuals choose risky behavior when the rational coefficient $\alpha$ is smaller, confirming our previous conclusion about their conservative behavior. 

\section{Conclusion}
In this paper, we propose the epidemic-behavior co-evolution method to capture individuals' behavior during a pandemic and analyze the co-evolution of disease spread and behavior change. We incorporate irrational behavior based on Prospect Theory and examine how it influences people's decision-making processes and, in turn, the spread of the disease. Through rigorous theoretical analysis, we identify three steady states under different scenarios and investigate their practical implications. To validate our theoretical findings and gain further insights, we conduct simulation experiments. The experimental results align closely with our theoretical predictions, confirming the correctness of our model. Notably, our findings consistently indicate that irrationality tends to steer individuals towards more conservative behavioral choices during a pandemic, effectively mitigating the spread of the disease.

\appendices

\section{Proof for the maximum spread extent}
\label{adx:A}
For the steady state $(i^*,x_1^*)$, it satisfies $\frac{di}{dt}=0$. If $i^*\neq 0$, then based on \eqref{eq:M-model}, we have:
\begin{equation}
    i^*=1-\frac{\gamma}{\bar{k}\beta_1x_1^*},
\end{equation}
and $i^*$ is an increasing function of $x_1^*$, where $x_1^* \in [0,1]$. Therefore, $i^*$ reaches its maximum value when $x_1^*=1$. Substituting this value into the equation, we obtain $\bar{i}=1-\frac{\gamma}{\bar{k}\beta_1}$.

\thispagestyle{noheader}
\bibliographystyle{IEEEtran}
\bibliography{mylib}
\end{document}